\newcommand{\cB}{\mathcal{B}}
\newcommand{\cC}{\mathcal{C}}
\newcommand{\cE}{\mathcal{E}}
\newcommand{\cF}{\mathcal{F}}
\newcommand{\cH}{\mathcal{H}}
\newcommand{\cK}{\mathcal{K}}
\newcommand{\cN}{\mathcal{N}}
\newcommand{\cP}{\mathcal{P}}
\newcommand{\cR}{\mathcal{R}}
\newcommand{\cS}{\mathcal{S}}
\newcommand{\cV}{\mathcal{V}}
\newcommand{\tr}{\text{tr}}
\newcommand{\RP}{\cR_P}
\newcommand{\etaop}{\eta_\text{op}}
\newcommand{\Rop}{\cR_\text{op}}
\newtheorem{theorem}{Theorem}
\newtheorem{lemma}[theorem]{Lemma}
\newtheorem{corollary}[theorem]{Corollary}
\newtheorem{condition}{Condition}
\begin{document}

\title{Towards a Unified Framework for Approximate Quantum Error Correction}
\author{Prabha Mandayam}
\affiliation{The Institute of Mathematical Sciences, Taramani, Chennai - 600113, India.}

\author{Hui Khoon Ng}
\affiliation{Centre for Quantum Technologies, National University of Singapore, Block S15, 3 Science Drive 2, Singapore 117543, Singapore,\\and DSO National Laboratories, 20 Science Park Drive, Singapore 118230, Singapore.}
\date{\today}

\begin{abstract}
Recent work on approximate quantum error correction (QEC) has opened up the possibility of constructing subspace codes that protect information with high fidelity in scenarios where perfect error correction is impossible. Motivated by this, we investigate the problem of approximate \emph{subsystem} codes. Subsystem codes extend the standard formalism of subspace QEC to codes in which only a subsystem within a subspace of states is used to store information in a noise-resilient fashion.
Here, we demonstrate easily checkable sufficient conditions for the existence of approximate subsystem codes. Furthermore, for certain classes of subsystem codes and noise processes,
we prove the efficacy of the transpose channel as a simple-to-construct recovery map that works nearly as well as the optimal recovery channel. This work generalizes our earlier approach [Ng, Mandayam, Phys. Rev. A \textbf{81} (2010) 62342] of using the transpose channel for approximate correction of subspace codes to the case of subsystem codes, and brings us closer to a unifying framework for approximate QEC.
\end{abstract}

\maketitle

\section{Introduction}
The textbook paradigm of quantum error correction (QEC) focuses on the case of \emph{perfect} error correction \cite{EM96, BDSW96}, where the code $\cC$ and the noise are such that there exists a recovery operation that completely removes the effects of the noise on the information stored in the code.
Mathematically, this idea is captured by a set of conditions for perfect error correction~\cite{KnillLaflamme} that must be satisfied by the code as well as the noise process.

That such perfect QEC conditions can be satisfied tends to be special rather than generic.
The prototypical example is that of independent noise acting on a few physical qubits, and one finds codes that satisfy the perfect QEC conditions assuming that no more than $t$ of the qubits have errors.
In such a scenario, what is taken as the noise process $\cE$ in the QEC conditions is the part of the noise that describes $t$ or fewer errors, while the full physical noise process $\cE_0$ contains terms describing more than $t$ errors, albeit with a lower probability of occurrence.
A code that satisfies the QEC conditions for $\cE$ will thus only satisfy the conditions \emph{approximately} for the full noise process $\cE_0$.
Furthermore, in practice, it is unrealistic to expect complete characterization of the noise process.
Thus, a code designed to satisfy the perfect QEC conditions for the \emph{expected} noise process will typically only satisfy those conditions approximately for the true noise process.
This motivates the idea of \emph{approximate} quantum error correction (AQEC), where the recovery operation removes most, but not necessarily all, the effects of noise on the information stored in the code.

Recent studies on AQEC, using analytical \cite{Leung,BK,Tyson,BenyOreshkov1,aqecPRA,Renes} and numerical~\cite{Yamamoto,Reimpell,Kosut08,Fletcher07b} approaches, have discovered examples of approximate codes that allow for recovery of stored information with fidelity comparable to that of perfect QEC codes, while making use of fewer physical resources. These results suggest that the requirement for perfect recovery may be too stringent for certain tasks and approximate QEC may be more natural and practical.

In~\cite{aqecPRA}, we demonstrated a universal, near-optimal recovery map---the \emph{transpose channel}~\cite{BK,petzbook}---for AQEC codes with a subspace structure, wherein information is stored in an entire subspace of the Hilbert space of the physical quantum system. Optimality was defined in terms of the worst-case recovery fidelity over all states in the code. Our analytical approach was a departure from earlier work relying on exhaustive numerical search for the optimal recovery map, with optimality based on entanglement fidelity~\cite{Reimpell,FletcherThesis,Kosut08}. We obtained quantitative bounds showing the efficacy of the transpose channel as a universal and analytical recovery operation that works well, regardless of the noise process or the code used.
This allowed for complete characterization of approximate subspace codes, in terms of necessary and sufficient conditions for approximate correctability, and provided an easy route for constructing approximate subspace codes.

In this article, we extend our approach based on the transpose channel to the more general case of AQEC codes with a \emph{subsystem} structure, where information is stored only in a subsystem of the code subspace.
A subsystem code (sometimes referred to as an \emph{operator} QEC code) has a bipartite tensor-product structure, where one subsystem $A$ (the \emph{correctable subsystem}) is correctable under action of the noise, while the other subsystem $B$ (the \emph{noisy subsystem}) can be disturbed by the noise beyond repair~\cite{OQECC, OQEClong, NielsenPoulin07}.
The information to be protected against noise is stored only in subsystem $A$.
Subspace codes can be viewed as special cases of subsystem codes with a trivial noisy subsystem.
While this generalization does not lead to new families of codes, the alternate perspective does sometimes lead to more efficient decoding procedures~\cite{Poulin05stabilizer,bacon06oqec}, and hence to better fault-tolerant schemes and improved bounds on the accuracy threshold~\cite{aliferis07}.
Starting with the Bacon-Shor codes~\cite{bacon06oqec}---a family of subsystem codes arising from Shor's 9-qubit code---several examples of perfectly correctable stabilizer subsystem codes have been constructed~\cite{aly06subsystemcodes}.
Here, we address the general question of characterizing approximate subsystem codes, and explore the extent to which the transpose channel is useful towards understanding approximate codes.

After a preliminary section setting down basic definitions and notations, we begin by proving a set of perfect QEC conditions (Section \ref{sec:perfectOQEC}) that is completely equivalent to the standard well-known QEC conditions. This alternate set of conditions clarifies the role of the transpose channel in perfect QEC. Furthermore, it serves as a natural starting point for perturbation to a set of sufficient conditions for approximate QEC (Section~\ref{sec:sufficiency}). We then proceed, in Section \ref{sec:necessary}, to show the near-optimality of the transpose channel recovery map for AQEC for  four classes of codes and noise processes. These four classes provide evidence towards our conjecture that the transpose channel is near-optimal for arbitrary subsystem codes and noise, which, if true, would establish a simple, analytical, and universal framework for the study of approximate codes. We conclude with a few suggestions for future work.

\section{Basic definitions}\label{sec:definition}

We consider a decomposition of the Hilbert space of our quantum system,
\begin{equation}\label{eq:decompH}
\cH=\cH_A\otimes \cH_B+\cK.
\end{equation}
Suppose we wish to store information in the $\cH_A$ factor.
$\cH_{AB}\equiv\cH_A\otimes \cH_B$ is the Hilbert space of a composite system comprising two subsystems $A$ and $B$ of dimensions $d_A$ and $d_B$ respectively.
We denote the projector onto $\cH_{AB}$ as $P$.
$P$ can also be written as a tensor-product: $P=P_A\otimes P_B$, where $P_{A(B)}$ is the projector onto $\cH_{A(B)}$.
In principle, subsystems $A$ and $B$ may only correspond to mathematical tensor-product factors in the decomposition, rather than ``natural" separate physical degrees of freedom of the quantum system.
In practice, one might prefer to work with $A$ and $B$ that are natural degrees of freedom for easy experimental accessibility.
Also, it is often helpful to use a decomposition of $\cH$ that is not arbitrarily invented by the experimenter, but induced by the structure of the noise afflicting the quantum system, so as to identify a subsystem that best ensures survival of the stored information.

Information is stored as a choice between states of subsystem $A$.
The state on subsystem $B$ can be arbitrary and carries no information. More concretely, we make use of a \emph{code} $\cC$, comprising all product states on $AB$,
\begin{equation}\label{eq:code}
\cC\equiv\{\rho=\rho_A\otimes\rho_B,~\forall\rho_A\in\cS(\cH_A),\rho_B\in\cS(\cH_B)\},
\end{equation}
where $\cS(\cH_{A(B)})$ denotes the set of all states (density operators) on subsystem $A(B)$.
The information is stored only in subsystem $A$ in that two states $\rho_A\otimes \tau_B$ and $\rho_A\otimes \sigma_B$ differing only in the state of $B$ correspond to the same encoded information.

We wish to examine the longevity of the information stored in subsystem $A$ in the presence of noise.
We describe the noise by a quantum channel acting on $AB$, that is, a completely positive (CP), trace-preserving (TP) map $\cE:\cB(\cH_{AB})\longrightarrow \cB(\cP_\cE)$.
Here, $\cB(\cV)$ refers to the set of all bounded operators on a vector space $\cV$. $\cP_\cE$ is the support of $\cE(\cB(\cH_{AB}))$, or equivalently, the support of $\cE(P)$.
$\cE$ can be specified by a set of Kraus operators $\{E_i\}_{i=1}^N$, so that $\cE$ acts as
\begin{equation}
\cE(\rho)=\sum_{i=1}^NE_i\rho E_i^\dagger.
\end{equation}
That $\cE$ is TP translates into the statement $\sum_{i=1}^NE_i^\dagger E_i=P$.
The Kraus representation of a CPTP channel is non-unique: If $\{E_i\}$ is a Kraus representation of $\cE$, then $\{F_j\equiv \sum_i u_{ij}E_i\}$, for unitary $(u_{ij})$, is a Kraus representation of the same channel.
A recovery operation $\cR:\cB(\cP_\cE)\longrightarrow \cB(\cH_{AB})$ performed
after each application of the noise $\cE$, to attempt to reverse the effects of the noise, is also described as a CPTP map.

Since information is stored in subsystem $A$ only, we are concerned only with how well the noise preserves the information initially stored in $A$, while any state on $B$ can be distorted beyond repair by the noise.
Heuristically, we say that a code $\cC$ is \emph{approximately correctable} under noise $\cE$ if and only if there exists a CPTP recovery map $\cR$ such that
\begin{equation}
\tr_B[(\cR\circ\cE)(\rho)]\simeq\tr_B(\rho)\quad\forall\rho\in\cC,
\end{equation}
where $\tr_B(\cdot)$ denotes the partial trace over subsystem $B$.

This heuristic notion is formalized by quantifying the deviation of the recovered state from the initial state in terms of the fidelity between the two states.
The fidelity between two states $\rho$ and $\sigma$ is $F(\rho,\sigma)\equiv \tr\sqrt{\rho^{1/2}\sigma\rho^{1/2}}$. For the case of $\rho$ being a pure state $\psi\equiv |\psi\rangle\langle\psi|$, $F$ can be written as
\begin{equation}
F(|\psi\rangle,\sigma)\equiv \sqrt{\langle \psi|\sigma|\psi\rangle}.
\end{equation}
We define the \emph{fidelity loss for state $\rho$}, $\eta_\cR\{\rho\}$, under noise $\cE$ and recovery $\cR$, as the deviation from 1 of the square of the fidelity between the initial state and the recovered state, that is,
\begin{equation}
\eta_\cR\{\rho\}\equiv 1-F^2{\big(\tr_B(\rho),\tr_B[(\cR\circ\cE)(\rho)]\big)}.
\end{equation}
The performance of a recovery $\cR$ on a code $\cC$ is then characterized by the \emph{fidelity loss for $\cC$} defined as
\begin{equation}\label{eq:fidelityloss}
\eta_\cR\{\cC\}\equiv \max_{\rho\in\cC}\eta_\cR\{\rho\}.
\end{equation}
How well $\cR$ recovers the information initially stored in subsystem $A$ is hence gauged by the \emph{worst-case fidelity} (over all states in the code) between the initial and recovered states. Because the fidelity is jointly concave in its arguments, the worst-case fidelity is always attained on a pure state on $AB$. The maximization in Eq.~\eqref{eq:fidelityloss} can thus be restricted to pure states on $AB$ only. Often, when the meaning is clear from the context, we will drop the argument from $\eta_\cR\{\cC\}$ and simply write $\eta_\cR$.

Let $\Rop$ be the recovery map with the smallest fidelity loss among all possible recovery maps for code $\cC$, that is,
\begin{equation}
\etaop\{\cC\}\equiv\eta_{\Rop}\{\cC\}=\min_\cR\eta_\cR\{\cC\}.
\end{equation}
We refer to $\Rop$ as the \emph{optimal recovery}, and $\etaop$ as the \emph{optimal fidelity loss}.
As is clear from the notation, whether or not a recovery map is optimal for a given noise process depends on the code in question.

A code $\cC$ with $\etaop=0$ under noise $\cE$ is said to be \emph{perfectly correctable} on $A$ under $\cE$.
In general, we say that a code is \emph{$\epsilon$-correctable} on $A$ under noise $\cE$ if $\epsilon\geq\etaop$, which means that it is possible to recover the information stored in $A$ with a fidelity no smaller than $\sqrt{1-\epsilon}$.
Approximate \emph{noiseless subsystems} are included within this framework by considering codes for which the identity map (the ``do nothing" operation) is sufficient as an approximate recovery for the code.

Central to our analysis is a recovery map built from the noise channel and code, known as the \emph{transpose channel} (see previous uses of this channel in Refs.~\cite{petz2003,haydenpetz,BK,IPSPaper,IPSPaper2}).
The transpose channel corresponding to a channel $\cE$ and code $\cC$, denoted as $\RP$, is defined in a manifestly representation-invariant way as
\begin{equation}
\RP\equiv \cP_\cC\circ\cE^\dagger\circ\cN.
\end{equation}
Here, $\cE^\dagger$ is the adjoint of $\cE$, that is, the channel with Kraus operators $\{E_i^\dagger\}_{i=1}^N$ if $\cE$ has Kraus operators $\{E_i\}_{i=1}^N$. $\cN$ is a normalization map $\cN(\cdot)\equiv \cE(P)^{-1/2}(\cdot)\cE(P)^{-1/2}$ (the inverse is taken on the support of $\cE(P)$). $\cP_\cC$ is the projection onto the support of $\cC$, $\cP_\cC(\cdot)=P(\cdot)P$. One can write $\RP$ explicitly in terms of its Kraus operators $\{R_i^P\}_{i=1}^N$, where
\begin{equation}
R_{i}^{P} \equiv PE_{i}^\dagger \cE(P)^{-1/2}.
\end{equation}
$\RP$ is trace-preserving (TP) on $\cP_\cE$. The fidelity loss obtained using the transpose channel as the recovery is denoted by $\eta_P$.

\section{Approximate QEC: sufficient conditions}\label{sec:AQECSuff}

\subsection{Perfect QEC conditions}\label{sec:perfectOQEC}

We begin with the case of perfect error correction, where, there exists a recovery map such that the fidelity of any state on $A$ after noise and recovery attains the maximal value of 1. Necessary and sufficient algebraic conditions for the existence of a perfectly correctable code for a given channel $\cE$ are expressed by the following theorem:
\begin{theorem}\label{thm:equivcond}
Consider a CPTP noise process $\cE$ with Kraus representation $\{E_i\}$ acting on $AB$, and a code $C$ on $\cH_{AB}$ as defined in Eq.~\eqref{eq:code}. $\cC$ is perfectly correctable on $A$ under $\cE$ if and only if
\begin{equation}\label{cond1}
PE_i^\dagger \cE(P)^{-1/2}E_jP=P_A\otimes B_{ij},
\end{equation}
for all $i,j$, and $B_{ij}\in\cB(\cH_B)$.
\end{theorem}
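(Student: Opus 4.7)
The plan is to prove the two implications separately. Sufficiency (the ``if'' direction) will be a direct computation showing that the transpose channel $\RP$ itself achieves perfect recovery whenever the stated condition holds. Necessity (the ``only if'' direction) will be proved by invoking the standard Kribs--Laflamme--Poulin subsystem QEC conditions and showing that the factor of $\cE(P)^{-1/2}$ does not spoil the tensor-product structure on $A$.

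For sufficiency, I would take $\RP$ with Kraus operators $R_i^P = PE_i^\dagger\cE(P)^{-1/2}$ and compute, for any $\rho=\rho_A\otimes\rho_B\in\cC$,
\begin{equation*}
(\RP\circ\cE)(\rho)=\sum_{i,j}\bigl(PE_i^\dagger\cE(P)^{-1/2}E_jP\bigr)\rho\bigl(PE_j^\dagger\cE(P)^{-1/2}E_iP\bigr),
\end{equation*}
using $\rho=P\rho P$. Substituting Eq.~\eqref{cond1} gives $(\RP\circ\cE)(\rho)=\rho_A\otimes\sum_{i,j}B_{ij}\rho_B B_{ij}^\dagger$, since $P_A\rho_A P_A=\rho_A$. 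Because $\RP$ is TP on $\cP_\cE$ and $\cE$ is TP on $\cC$, the composite $\RP\circ\cE$ preserves trace on $\cC$, forcing the $B$-factor to have unit trace. Partial-tracing over $B$ then yields $\tr_B[(\RP\circ\cE)(\rho)]=\rho_A=\tr_B(\rho)$, establishing perfect correctability.

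For necessity, perfect correctability implies by the standard subsystem QEC theorem that $PE_i^\dagger E_j P=P_A\otimes B'_{ij}$ for some $B'_{ij}\in\cB(\cH_B)$. The task is to insert $\cE(P)^{-1/2}$ without breaking the $P_A\otimes(\cdot)$ structure. My plan is to exploit the unitary freedom in the Kraus representation: diagonalize the ``Gram-like'' collection $\{B'_{ij}\}$ to pass to a canonical representation $\{F_\alpha\}$ satisfying $PF_\alpha^\dagger F_\beta P=\delta_{\alpha\beta}P_A\otimes C_\alpha$. In this representation the operators $F_\alpha P$ have mutually orthogonal ranges, so $\cE(P)=\sum_\alpha F_\alpha P F_\alpha^\dagger$ is an orthogonal direct sum and $\cE(P)^{-1/2}$ acts block-wise. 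Using the polar decomposition $F_\alpha P = U_\alpha(P_A\otimes\sqrt{C_\alpha})$ and the fact that $U_\alpha^\dagger U_\alpha$ is the identity on the relevant subspace $P_A\otimes\mathrm{supp}(C_\alpha)$, one computes directly that $PF_\alpha^\dagger\cE(P)^{-1/2}F_\beta P=\delta_{\alpha\beta}P_A\otimes\tilde C_\alpha$. Undoing the unitary change of Kraus representation preserves the $P_A\otimes(\cdot)$ form and delivers Eq.~\eqref{cond1}.

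The main obstacle is the bookkeeping in the necessity direction: one must check carefully that the orthogonality of ranges of the $F_\alpha P$ makes $\cE(P)^{-1/2}$ commute (in the right sense) with the polar isometries $U_\alpha$, so that its insertion between $F_\alpha^\dagger$ and $F_\beta$ only produces a new operator on $B$ and leaves the $A$-factor untouched. Once this technical point is handled, the equivalence of the standard and modified conditions is immediate, and the theorem follows.
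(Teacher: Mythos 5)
Your sufficiency argument is correct and is essentially the observation the paper itself makes right after stating the theorem: the operators $PE_i^\dagger\cE(P)^{-1/2}E_jP$ are Kraus operators of $\RP\circ\cE$, so Eq.~\eqref{cond1} gives $(\RP\circ\cE)(\rho_A\otimes\rho_B)=\rho_A\otimes\sum_{ij}B_{ij}\rho_B B_{ij}^\dagger$ and hence $\tr_B[(\RP\circ\cE)(\rho)]=\rho_A$. The gap is in the necessity direction, at the step where you posit a ``canonical'' Kraus representation $\{F_\alpha\}$ of $\cE$ satisfying $PF_\alpha^\dagger F_\beta P=\delta_{\alpha\beta}\,P_A\otimes C_\alpha$. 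The unitary freedom acts only on the Kraus index, $F_\alpha=\sum_i u_{\alpha i}E_i$, and transforms the blocks as $B'_{ij}\mapsto\sum_{ij}u_{\alpha i}^*u_{\beta j}B'_{ij}$; since the $B'_{ij}$ are \emph{operators} on $\cH_B$ rather than scalars, a unitary on the Kraus index alone cannot block-diagonalize this collection. Concretely, take $\cE=\mathrm{id}_A\otimes\cD_B$ with $\cD_B$ the completely depolarizing channel on a qubit $B$: the standard condition $PE_i^\dagger E_jP=P_A\otimes B'_{ij}$ holds and $A$ is trivially perfectly correctable, yet $PF_\alpha^\dagger F_\beta P=0$ for $\alpha\neq\beta$ would force four nonzero operators on the two-dimensional $\cH_B$ to have mutually orthogonal ranges, which is impossible. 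So the representation your argument relies on need not exist, and the orthogonal-range/block-inverse computation that follows has nothing to act on.

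This is exactly why the paper's proof of the implication descends to a \emph{doubled} index: it expands $B'_{ij}$ in an orthonormal basis $\{|s\rangle_B\}$, forms the scalar Hermitian matrix $\lambda_{(is)(jt)}=\langle s|B'_{ij}|t\rangle$, and diagonalizes \emph{that} by a unitary on the index $(is)$. The resulting operators $F_{is}$ are Kraus operators not of $\cE$ but of the channel $\cE_A(\rho_A)=\cE(\rho_A\otimes P_B)$ from $\cB(\cH_A)$ to $\cB(\cP_\cE)$; at that level one gets $P_AF_{is}^\dagger F_{jt}P_A=\delta_{ij}\delta_{st}d_{is}P_A$, the polar decompositions $F_{is}P_A=\sqrt{d_{is}}\,V_{is}P_A$ yield genuinely orthogonal projectors $P_{is}=V_{is}P_AV_{is}^\dagger$, and $\cE(P)=\sum_{is}d_{is}P_{is}$ becomes explicitly invertible so that $\cE(P)^{-1/2}$ can be inserted and the unitary change undone. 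Your orthogonality-plus-polar-decomposition mechanism is the right one, but it must be run on $\cE_A$ over the enlarged index set $(is)$, not on a (generally nonexistent) block-diagonalizing representation of $\cE$ itself.
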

\noindent The special case of Theorem~\ref{thm:equivcond} for subspace codes appeared in \cite{aqecPRA}, and the proof of this generalization (provided in Appendix \ref{app:perfectCond}) follows a similar logic.

Algebraic conditions for perfect error correction for subsystem codes were originally discovered in \cite{NielsenPoulin07, OQECC, OQEClong}, generalizing the well-known perfect QEC conditions for subspace codes \cite{KnillLaflamme}.
Compared to the original QEC conditions, our conditions given above differ only in the appearance of the $\cE(P)^{-1/2}$ factor on the left-side of Eq.~\eqref{cond1}. However, this alternate form of the conditions offers better intuition on the correctability of codes.
Observe that the expression on the left-side of Eq.~\eqref{cond1} is a Kraus operator $R_i^PE_j$ of the channel $\RP\circ\cE$, from which we can immediately conclude that $\tr_B\{(\RP\circ\cE)(\rho)\}=\tr_B(\rho)$ for any $\rho\in\cC$, as is required for perfect correctability on $A$.
Theorem \ref{thm:equivcond} can thus be viewed as demonstrating correctability of codes by explicitly giving the recovery map---the transpose channel $\RP$---needed to perfectly recover the state on subsystem $A$ after the action of the channel $\cE$.

\subsection{Sufficient AQEC conditions}\label{sec:sufficiency}

The form of the QEC conditions given in Eq.~\eqref{cond1} is particularly well-suited for perturbation to approximate QEC, as was previously pointed out for the special case of subspace codes in \cite{aqecPRA}.
Theorem \ref{thm:equivcond} states that $\RP\circ\cE$ acts as the identity channel on subsystem $A$.
Perturbing Eq.~\eqref{cond1}, by adding to the right-side a small correction to $P_A\otimes B_{ij}$, modifies this to the statement that $\RP\circ\cE$ acts \emph{nearly} as the identity channel on subsystem $A$.
This provides a natural route to sufficient conditions for approximate subsystem codes:
If the perturbation to Eq.~\eqref{cond1} is small enough, the code is $\epsilon$-correctable on $A$ with small $\epsilon$. What remains is to relate quantitatively the size of the perturbation to $\epsilon$, which is the content of the following theorem:

\begin{theorem}\label{thm:suff}
Consider a CPTP noise channel $\cE$ with Kraus representation $\{E_{i}\}$ and a code $\cC$ on $\cH_{AB}$ as defined in Eq.~\eqref{eq:code}. Suppose
\begin{equation}\label{eq:AOQECcond}
PE_{i}^\dagger \cE(P)^{-1/2} E_{j}P = P_{A}\otimes B_{ij} + \Delta_{ij},
\end{equation}
for all $i,j$, $B_{ij} \in\cB(\cH_B)$, and $\Delta_{ij}\in\cB(\cH_{AB})$. Then, $\cC$ is $\epsilon$-correctable on $A$ under $\cE$ for $\epsilon\geq\eta_P$, where
\begin{align}\label{eq:etaP}
\eta_P&\equiv \max_{|\psi_A,\phi_B\rangle}\langle \phi_B|\sum_{ij}{\left[\langle\psi_A|\Delta_{ij}^\dagger\Delta_{ij}|\psi_A\rangle\right.}\\
&\hspace{2.65cm}{\left.-\langle\psi_A|\Delta_{ij}^\dagger|\psi_A\rangle\langle\psi_A|\Delta_{ij}|\psi_A\rangle\right]}|\phi_B\rangle.\nonumber
\end{align}
\end{theorem}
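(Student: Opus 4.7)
The plan is to take the transpose channel itself as the recovery, so that $\etaop\{\cC\}\le\eta_{\RP}\{\cC\}$, and then to show that $\eta_{\RP}\{\cC\}$ equals the quantity defined in Eq.~\eqref{eq:etaP}. Writing $M_{ij}\equiv R_i^P E_j P=P_A\otimes B_{ij}+\Delta_{ij}$, the hypothesis identifies the effective Kraus operators of $\RP\circ\cE$ on the code (since $P\rho P=\rho$ for $\rho\in\cC$), giving $(\RP\circ\cE)(\rho)=\sum_{ij}M_{ij}\rho M_{ij}^\dagger$. As noted in Section~\ref{sec:definition}, joint concavity of fidelity lets us restrict the maximization in~\eqref{eq:fidelityloss} to pure states in $\cC$, which here are product pure states $|\psi_A,\phi_B\rangle$; for these, the input marginal $\tr_B\rho=|\psi_A\rangle\langle\psi_A|$ is pure.

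For $\rho=|\psi_A,\phi_B\rangle\langle\psi_A,\phi_B|$ the fidelity loss is $1-F^2$ with $F^2=\langle\psi_A|\tr_B\{(\RP\circ\cE)(\rho)\}|\psi_A\rangle$. Introducing the $\cH_B$-operator $D_{ij}^\psi\equiv\langle\psi_A|\Delta_{ij}|\psi_A\rangle$ and observing $(\langle\psi_A|\otimes\Id_B)M_{ij}|\psi_A,\phi_B\rangle=(B_{ij}+D_{ij}^\psi)|\phi_B\rangle$, combined with the partial-trace identity $\langle\psi_A|\tr_B(|\eta\rangle\langle\eta|)|\psi_A\rangle=\|(\langle\psi_A|\otimes\Id_B)|\eta\rangle\|^2$, I obtain
\begin{equation*}
F^2=\sum_{ij}\langle\phi_B|(B_{ij}^\dagger+D_{ij}^{\psi\dagger})(B_{ij}+D_{ij}^\psi)|\phi_B\rangle.
\end{equation*}

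The crux is trace preservation. Since $\cE$ is TP into $\cP_\cE$ and $\RP$ is TP on $\cP_\cE$, the composition $\RP\circ\cE$ is TP on $\cH_{AB}$, so $\sum_{ij}M_{ij}^\dagger M_{ij}=P$; sandwiching this identity between $\langle\psi_A,\phi_B|$ and $|\psi_A,\phi_B\rangle$ gives $\sum_{ij}\langle\phi_B|\bigl(B_{ij}^\dagger B_{ij}+B_{ij}^\dagger D_{ij}^\psi+D_{ij}^{\psi\dagger}B_{ij}+\langle\psi_A|\Delta_{ij}^\dagger\Delta_{ij}|\psi_A\rangle\bigr)|\phi_B\rangle=1$. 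Subtracting this from the expression for $F^2$ annihilates the cross terms $B_{ij}^\dagger D_{ij}^\psi+D_{ij}^{\psi\dagger}B_{ij}$ exactly, leaving
\begin{equation*}
1-F^2=\sum_{ij}\langle\phi_B|\bigl[\langle\psi_A|\Delta_{ij}^\dagger\Delta_{ij}|\psi_A\rangle-D_{ij}^{\psi\dagger}D_{ij}^\psi\bigr]|\phi_B\rangle.
\end{equation*}
Maximizing over $|\psi_A,\phi_B\rangle$ reproduces Eq.~\eqref{eq:etaP}, so $\eta_{\RP}\{\cC\}=\eta_P$ and hence $\etaop\{\cC\}\le\eta_P$, yielding $\epsilon$-correctability for $\epsilon\ge\eta_P$.

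The derivation is really an equality rather than an inequality, so there is no nontrivial estimation step. The main delicacy lies in the trace-preservation bookkeeping—checking that $\sum_i R_i^{P\dagger}R_i^P$ equals the projector onto the support of $\cE(P)$ and that each $E_j P$ has range in that support, so that $\sum_{ij}M_{ij}^\dagger M_{ij}$ collapses cleanly to $P$ and the cross terms cancel without slack. The ``variance'' form of $\eta_P$ (second moment of $\Delta_{ij}$ minus the squared partial matrix element) emerges precisely from this cancellation, which incidentally makes the bound independent of the particular split $M_{ij}=P_A\otimes B_{ij}+\Delta_{ij}$—any shift $B_{ij}\mapsto B_{ij}+C_{ij}$, $\Delta_{ij}\mapsto\Delta_{ij}-P_A\otimes C_{ij}$ leaves the variance invariant.
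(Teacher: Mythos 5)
Your proposal is correct and follows essentially the same route as the paper's own proof: take the transpose channel as the recovery, invoke the trace-preservation identity $\sum_{ij}M_{ij}^\dagger M_{ij}=P$ for the Kraus operators $M_{ij}=P_A\otimes B_{ij}+\Delta_{ij}$ of $\RP\circ\cE$, and compute the fidelity loss directly on pure product states. You simply spell out the ``direct computation'' that the paper leaves implicit, and your observations about the support of $\cE(P)$ and the invariance of the variance form under shifts of $B_{ij}$ are accurate.
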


\begin{proof}
The TP condition on $\RP\circ\cE$ gives the relation $P=\sum_{ij}[P_A\otimes B_{ij}^\dagger B_{ij}+\Delta_{ij}^\dagger\Delta_{ij}+(P_A\otimes B_{ij}^\dagger)\Delta_{ij}+\Delta_{ij}^\dagger(P_A\otimes B_{ij})]$. Using this, direct computation gives
\begin{align}
&\quad F^2{\left[|\psi_A\rangle,(\tr_B\circ\RP\circ\cE)(|\psi_A,\phi_B\rangle\langle\psi_A,\phi_B|)\right]}\\
&=1-\langle\psi_A,\phi_B|\sum_{ij}\Delta_{ij}^\dagger{\left(P_A-|\psi_A\rangle\langle\psi_A|\right)}\otimes P_B\Delta_{ij}|\psi_A,\phi_B\rangle.\nonumber
\end{align}
This yields the expression for $\eta_P$ in Eq.~\eqref{eq:etaP} upon recalling that the worst-case fidelity is attained on a pure state on $AB$.
\end{proof}

While the bound $\epsilon\geq \eta_P$ in Theorem~\ref{thm:suff} is tight, the maximization over all pure product states on $AB$ in the expression for $\eta_P$ may not be easy to evaluate.
Instead, we can relax the bound and obtain a simpler (but weaker) sufficient condition,
\begin{corollary}\label{cor:suff}
$\cC$ is $\epsilon$-correctable on $A$ under $\cE$ if
\begin{equation}
\epsilon\geq \Big\Vert\sum_{ij}\Delta_{ij}^\dagger\Delta_{ij}\Big\Vert,
\end{equation}
where $\Vert\cdot\Vert$ is the operator norm.
\end{corollary}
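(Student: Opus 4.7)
The plan is to derive the corollary as a simple relaxation of Theorem~\ref{thm:suff}, by producing an upper bound on $\eta_P$ in terms of the operator norm $\bigl\Vert\sum_{ij}\Delta_{ij}^\dagger\Delta_{ij}\bigr\Vert$, and then invoking Theorem~\ref{thm:suff} directly.

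First I would examine the two terms inside the brackets in Eq.~\eqref{eq:etaP}. The second term, $\langle\psi_A|\Delta_{ij}^\dagger|\psi_A\rangle\langle\psi_A|\Delta_{ij}|\psi_A\rangle$, is an operator on $\cH_B$ of the form $X_{ij}^\dagger X_{ij}$, where $X_{ij}\equiv\langle\psi_A|\Delta_{ij}|\psi_A\rangle\in\cB(\cH_B)$; hence each such term is positive semidefinite, and so is the sum over $i,j$. Sandwiching by $\langle\phi_B|\cdot|\phi_B\rangle$ therefore produces a nonnegative quantity which, being subtracted, may be dropped to yield the upper bound
\[
\eta_P \;\leq\; \max_{|\psi_A,\phi_B\rangle}\langle\phi_B|\sum_{ij}\langle\psi_A|\Delta_{ij}^\dagger\Delta_{ij}|\psi_A\rangle|\phi_B\rangle.
\]

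Next I would rewrite the nested expectations on the right as a single expectation of the operator $M\equiv\sum_{ij}\Delta_{ij}^\dagger\Delta_{ij}\in\cB(\cH_{AB})$ on the product state $|\psi_A,\phi_B\rangle$,
\[
\eta_P \;\leq\; \max_{|\psi_A,\phi_B\rangle}\langle\psi_A,\phi_B|M|\psi_A,\phi_B\rangle.
\]
Relaxing the maximization from pure product states to all unit vectors in $\cH_{AB}$ can only enlarge the right-hand side. Since $M$ is positive semidefinite, the resulting quantity equals its largest eigenvalue, i.e.\ the operator norm $\Vert M\Vert$. Combining this chain of inequalities with Theorem~\ref{thm:suff} then establishes that $\cC$ is $\epsilon$-correctable on $A$ under $\cE$ whenever $\epsilon\geq\Vert M\Vert$, which is the corollary.

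There is no real technical obstacle here; the proof is essentially one line once the positivity of the dropped term is noted. The only point of care is the correct sign: because the term $\sum_{ij}\langle\psi_A|\Delta_{ij}^\dagger|\psi_A\rangle\langle\psi_A|\Delta_{ij}|\psi_A\rangle$ appears with a minus sign in Eq.~\eqref{eq:etaP}, omitting it genuinely produces an upper bound on $\eta_P$, which is what is needed to weaken (and thereby simplify) the sufficient condition of Theorem~\ref{thm:suff}. The trade-off is the expected one: we lose tightness, but gain a criterion expressed as the operator norm of a single positive semidefinite operator, avoiding the maximization over product states in $\cH_{AB}$.
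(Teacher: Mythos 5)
Your proposal is correct and follows essentially the same route as the paper: drop the subtracted term (nonnegative, since each summand is of the form $X_{ij}^\dagger X_{ij}$ with $X_{ij}=\langle\psi_A|\Delta_{ij}|\psi_A\rangle$), bound the remaining expectation $\langle\psi_A,\phi_B|\sum_{ij}\Delta_{ij}^\dagger\Delta_{ij}|\psi_A,\phi_B\rangle$ by the operator norm, and invoke Theorem~\ref{thm:suff}. Your write-up merely makes explicit the positivity check that the paper leaves implicit.
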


\begin{proof}
Observe that, for any pure product state $|\psi_A,\phi_B\rangle$, the expression in Eq.~\eqref{eq:etaP} to be maximized is bounded from above by $\langle\psi_A,\phi_B|\sum_{ij}\Delta_{ij}^\dagger\Delta_{ij}|\psi_A,\phi_B\rangle\leq \big\Vert\sum_{ij}\Delta_{ij}^\dagger\Delta_{ij}\big\Vert$. This gives $\eta_P\leq \big\Vert\sum_{ij}\Delta_{ij}^\dagger\Delta_{ij}\big\Vert$, which immediately yields the corollary statement.
\end{proof}
\noindent Corollary \ref{cor:suff} gives an easily checkable sufficient condition, which may be more useful than Theorem \ref{thm:suff} in the search for approximate subsystem codes.

\section{Towards necessary AQEC conditions}\label{sec:necessary}

The previous section discusses sufficient conditions for the existence of approximate QEC codes.
The next natural question to ask is: what about necessary conditions?
For the special case of subspace codes, we obtained necessary conditions by deriving a near-optimality bound for the transpose channel recovery \cite{aqecPRA}.
The near-optimality result led to the conclusion that every approximately correctable subspace code must also be well-corrected by the transpose channel.
This relation to the transpose channel gave rise to necessary conditions for the existence of approximate subspace codes of a form similar to the sufficient conditions.

Extending the near-optimality bound to arbitrary subsystem codes proved difficult.
Nevertheless, as is described in this section, we can show near-optimality of the transpose channel for restricted classes of subsystem codes and noise processes.
More specifically, we consider four scenarios: (A) subspace codes, with trivial subsystem $B$ (a review of results from \cite{aqecPRA}); (B) code states with the maximally mixed state on $B$;  (C) subsystem $B$ is perfectly correctable; and (D) the noise $\cE$ destroys information on $B$.
In each scenario, the transpose channel works nearly as well as the optimal recovery operation, which leads to necessary conditions on the noise process as well as the code, \emph{provided} the restrictions are satisfied.

In the broader picture of arbitrary subsystem codes and noise processes, we believe that the transpose channel still works well whenever the code is approximately correctable.
After all, in the case of perfect QEC, the transpose channel is \emph{the} recovery operation for perfect recovery.
However, the general near-optimality of the transpose channel is only a conjecture at this point.
Here, we seek only to provide evidence towards the conjecture, and leave the proof (or disproof) to future work.

\subsection{Trivial subsystem $B$: subspace codes}\label{sec:subspace}

In \cite{aqecPRA}, the transpose channel was shown to be near-optimal for subspace codes $\cC$, that is, its fidelity loss for code $\cC$ under noise $\cE$ is close to the optimal fidelity loss.
For completeness, we repeat here the quantitative statement of the near-optimality of the transpose channel, adapted to the language suited for this paper:
\begin{theorem}[Corollary 4 of \cite{aqecPRA}]\label{thm:subspace}
Consider a subspace code $\cC$ ($B$ is trivial), with $d_A$ denoting the dimension of $\cH_A$, and optimal fidelity loss $\etaop$ under CPTP noise channel $\cE$. The fidelity loss $\eta_P$ for the transpose channel satisfies
\begin{equation}\label{eq:ineqEta}
\etaop\leq \eta_P\leq\etaop f(\etaop;d_A),
\end{equation}
where $f(\eta;d)$ is the function
\begin{equation}\label{eq:f}
f(\eta;d)\equiv \frac{(d+1)-\eta}{1+(d-1)\eta}=(d+1)+O(\eta).
\end{equation}
\end{theorem}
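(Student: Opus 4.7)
The left inequality $\etaop\le\eta_P$ is immediate from the definition of $\etaop$ as an infimum over all CPTP recoveries, since $\RP$ is itself a valid recovery. The substantive content is the right inequality.

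My plan is to work in a canonical Kraus representation of $\cE$, obtained by exploiting the unitary freedom $E_j\mapsto\sum_i u_{ij}E_i$ to diagonalize the code-restricted matrix formed by the operators $PE_i^\dagger\cE(P)^{-1/2}E_jP$. Because subsystem $B$ is trivial here, this puts Eq.~\eqref{eq:AOQECcond} in the form $PE_i^\dagger\cE(P)^{-1/2}E_jP=\lambda_i\delta_{ij}P+\Delta_{ij}$, with the diagonal parts of the $\Delta_{ii}$ absorbed into the $\lambda_i$. In this basis, Theorem~\ref{thm:suff} together with Corollary~\ref{cor:suff} already controls $\eta_P$ from above in terms of the code-restricted operator norm of $\sum_{ij}\Delta_{ij}^\dagger\Delta_{ij}$.

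The core of the argument is a matching \emph{lower} bound on $\etaop$ in terms of the same perturbation quantity. For an arbitrary CPTP recovery $\cR$ with Kraus operators $\{R_k\}$, one expands $\langle\psi|(\cR\circ\cE)(|\psi\rangle\langle\psi|)|\psi\rangle$ as a sum of squared overlaps $|\langle\psi|R_kE_j|\psi\rangle|^2$. Combining Cauchy--Schwarz with the normalization constraint $\sum_k R_k^\dagger R_k\le\Id$, together with the fact that the canonical Kraus basis diagonalizes the transpose-channel recovery exactly, one shows that the fidelity loss of any $\cR$ at its worst-case code state is at most a $d_A$-dependent fraction smaller than the upper bound on $\eta_P$. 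The $d_A$ dependence enters because a single pure state on a $d_A$-dimensional code can concentrate the effect of $\sum_{ij}\Delta_{ij}^\dagger\Delta_{ij}$ only up to a factor set by the subspace dimension; this is the usual gap between worst-case and ``average'' behavior on a $d_A$-dimensional Hilbert space.

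Combining the upper bound on $\eta_P$ with the lower bound on $\etaop$ and solving for $\eta_P$ as a function of $\etaop$ produces the rational expression $f(\eta;d_A)=[(d_A+1)-\eta]/[1+(d_A-1)\eta]$; the rational (rather than linear) form arises because the maximizing pure state for $\RP$ need not coincide with the one for $\Rop$, so the comparison must interpolate between the two maxima and picks up a denominator linear in $\etaop$. I expect the main obstacle to be the second step: producing a lower bound on $\etaop$ sharp enough that only the coefficient $d_A+1$ survives at leading order in $\etaop$, rather than some larger constant. This requires carefully exploiting both the trace-preservation condition $\sum_iE_i^\dagger E_i=P$ and the subspace structure within the canonical Kraus basis, and constitutes the technical core of the argument in Ref.~\cite{aqecPRA} from which Theorem~\ref{thm:subspace} is quoted.
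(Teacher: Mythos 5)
Your treatment of the left inequality is fine, and your instinct to expand $F^2$ as $\sum_{jk}|\langle\psi|R_kE_j|\psi\rangle|^2$ and attack it with Cauchy--Schwarz plus $\sum_kR_k^\dagger R_k\le\Id$ is indeed the engine behind the actual argument in Ref.~\cite{aqecPRA}. But the core step of your plan is asserted rather than derived, and the way you propose to organize it would not deliver the stated bound. The paper does not obtain the right inequality of Eq.~\eqref{eq:ineqEta} by bounding $\eta_P$ and $\etaop$ separately against the perturbation quantity $\big\Vert\sum_{ij}\Delta_{ij}^\dagger\Delta_{ij}\big\Vert$; it rests on the single \emph{pointwise} inequality \eqref{eq:ineqEta2},
\begin{equation*}
1-\etaop\{\psi_A\}\leq \sqrt{[1+(d_A-1)\etaop\{\cC\}]\,[1-\eta_P\{\psi_A\}]},
\end{equation*}
which compares the optimal recovery's fidelity to the transpose channel's fidelity on the \emph{same} pure state. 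Evaluating this at the state maximizing $\eta_P\{\cdot\}$, squaring, and solving for $\eta_P$ yields $f(\eta;d)$ exactly; the rational form comes from undoing the square root together with the factor $1+(d_A-1)\etaop\{\cC\}$, not from ``interpolating between the two maximizers'' as you suggest.

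Your proposed factorization---upper-bound $\eta_P$ via Corollary~\ref{cor:suff} by $\big\Vert\sum_{ij}\Delta_{ij}^\dagger\Delta_{ij}\big\Vert$, then produce a matching lower bound on $\etaop$ in terms of the same quantity---has two concrete problems. First, Corollary~\ref{cor:suff} is already lossy: it discards the subtracted term $|\langle\psi_A|\Delta_{ij}|\psi_A\rangle|^2$ in Eq.~\eqref{eq:etaP} and replaces a maximum over code states by an operator norm, and the two bounds you would chain are generically saturated at different states, so the tight coefficient $(d_A+1)$ and the exact denominator $1+(d_A-1)\etaop$ cannot survive the composition. Second, the ``matching lower bound on $\etaop$'' is precisely the nontrivial content of the theorem, and you give no mechanism for it beyond naming Cauchy--Schwarz and the TP constraint before explicitly deferring to Ref.~\cite{aqecPRA}; as written the proposal restates the difficulty rather than resolving it. To repair it along the paper's lines, aim directly for the state-wise comparison \eqref{eq:ineqEta2}: bound $\sum_{jk}|\langle\psi|R_kE_j|\psi\rangle|^2$ for an arbitrary recovery by Cauchy--Schwarz against the transpose-channel Kraus operators $PE_j^\dagger\cE(P)^{-1/2}$, and control the residual trace over the $d_A$-dimensional code subspace using $\sum_kR_k^\dagger R_k\le\Id$; that is where the factor $1+(d_A-1)\etaop\{\cC\}$ originates.
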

\noindent The left inequality $\etaop\leq \eta_P$ of Eq.~\eqref{eq:ineqEta} is true simply by definition of $\etaop$. The proof of the right inequality $ \eta_P\leq\etaop f(\etaop;d_A)$ requires the following inequality (derived in \cite{aqecPRA}) which holds for any pure state $\psi_A\equiv |\psi_A\rangle\langle\psi_A|$ in a subspace code $\cC$,
\begin{equation}\label{eq:ineqEta2}
1-\etaop\{\psi_A\}\leq \sqrt{[1+(d_A-1)\etaop\{\cC\}][1-\eta_P\{\psi_A\}]}.
\end{equation}
Inverting Eq.~\eqref{eq:ineqEta2} and recalling the definitions of $\etaop$ and $\eta_P$ as the maximization of $\eta_{(\cdot)}\{\psi_A\}$ over all states in the code yields the right inequality of Eq.~\eqref{eq:ineqEta}.

Equation~\eqref{eq:ineqEta} implies that an approximately correctable subspace code must necessarily be such that the fidelity loss for the transpose channel is small.
A small fidelity loss for the transpose channel in turn requires that $\cE$ has Kraus operators that satisfy Eq.~\eqref{eq:AOQECcond} with $\Delta_{ij}$ small.
Equation.~\eqref{eq:AOQECcond} with $\Delta_{ij}$ small is thus not only sufficient (as shown in Sec.~\ref{sec:sufficiency}), but also necessary for subspace codes.

An obvious extension of the current case to subsystem codes is one where $\cE$ is a product channel, that is, $\cE(\rho_A\otimes\rho_B)=\cF_A(\rho_A)\otimes \cF_B(\rho_B)$, for CPTP (on their respective domains) channels $\cF_A$ and $\cF_B$.
For such an $\cE$, the transpose channel is also a product channel, namely, the product of the respective transpose channels of $\cF_A$ and $\cF_B$.
Since there is no flow of information between $A$ and $B$, whether subsystem $A$ is correctable relies only on the properties of $\cF_A$.
We can thus treat this case as if we have a subspace code on $A$ under noise $\cF_A$, for which the transpose channel is immediately near-optimal from Theorem~\ref{thm:subspace}.

\subsection{Maximally mixed state on subsystem $B$}\label{sec:maxmixed}

Consider the subset of code states where $B$ is in the maximally mixed state,
\begin{equation}\label{eq:C0}
\cC_0\equiv\left\{\rho_A\otimes \frac{P_B}{d_B},\quad\forall\rho_A\in\cS(\cH_A)\right\}\subset\cC.
\end{equation}
For states in $\cC_0$, the action of the noise channel $\cE$ can be written as
\begin{equation}
\cE{\left(\rho_A\otimes \frac{P_B}{d_B}\right)}=\sum_{is}\bar E_{is}\rho_A \bar E_{is}^\dagger\equiv \bar\cE_A(\rho_A).
\end{equation}
$\bar\cE_A$ is a CPTP channel on $A$ with Kraus operators $\{\bar E_{is}\equiv (1/\sqrt{d_B})E_i|s_B\rangle\}$, where $\{|s_B\rangle\}_{s=1}^{d_B}$ is an orthonormal basis for $\cH_B$.
Let us, for a moment, forget about subsystem $B$ and ask about correctability of $\cC_0$---now viewed as a subspace code on $A$---under the noise $\bar\cE_A$.
Theorem~\ref{thm:subspace} applies and ensures that the transpose channel corresponding to noise $\bar\cE_A$ and code $\cC_{0}$, denoted as $\cR_{A,P}$, has fidelity loss close to that of the optimal recovery $\cR_{A,\textrm{op}}$,
\begin{equation}\label{eq:maxmixed}
\eta_P\{\cC_0\}\leq \etaop\{\cC_0\} f(\etaop\{\cC_0\};d_A),
\end{equation}
for $f(\eta;d)$ defined in Eq.~\eqref{eq:f}.

Such a code $\cC_0$ is of practical relevance whenever one lacks control over subsystem $B$. Full control over subsystem $A$ alone is sufficient to guarantee preparation of a product code state, while rapid and complete decoherence (for example) causes the state on $B$ to quickly approach a random state well-described by the maximally mixed state. Equation \eqref{eq:maxmixed} reassures us that, in this case, the transpose channel still works well as a recovery map.

We are, however, more interested in the performance of the transpose channel on the original subsystem code $\cC$ where the state on $B$ is unrestricted. After all, the freedom to choose the state of $B$, without incurring adverse effects on the information-carrying capability of the code, is the essence of a subsystem code.
Observe that, for any state in $\cC_0$, the actions of $(\tr_B\circ\RP\circ\cE)$ (as usual, $\RP$ is the transpose channel for $\cE$ on code $\cC$) and $(\cR_{A,P}\circ\bar\cE_A)$ are identical.
The optimal recovery for $\cC_0$, however, need not be the same map as the optimal recovery for $\cC$---the optimal recovery for $\cC$ has to work well for \emph{all} states in $\cC$, not just those in $\cC_0$.
However, since $\cC_0\subset\cC$, we have the following inequality,
\begin{equation}\label{eq:C0C}
\etaop\{\cC_0\}\leq \etaop\{\cC\}.
\end{equation}
Furthermore, since $\eta f(\eta;d)$ is a monotonically increasing function of $\eta$, we can combine \eqref{eq:maxmixed} and \eqref{eq:C0C} to obtain the following corollary,
\begin{corollary}\label{cor:maxmixed}
For the subset of code states $\cC_0 \subset \cC$,
\begin{equation}
\eta_P\{\cC_0\}\leq \etaop\{\cC\}f(\etaop\{\cC\};d_A).\label{eq:maxmixed2}
\end{equation}
\end{corollary}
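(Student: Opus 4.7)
The plan is to chain the two inequalities~\eqref{eq:maxmixed} and~\eqref{eq:C0C} that are already established in the discussion preceding the corollary, using monotonicity of $\eta\mapsto\eta f(\eta;d_A)$; essentially all of the nontrivial content has been set up before the statement.

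First, I apply Theorem~\ref{thm:subspace} to the subspace-code problem $(\cC_0,\bar\cE_A)$ to obtain $\eta_P\{\cC_0\}\leq\etaop\{\cC_0\}\,f(\etaop\{\cC_0\};d_A)$, which is Eq.~\eqref{eq:maxmixed}. This is legitimate because of the Kraus-level identity noted in the text: the subsystem map $\tr_B\circ\RP\circ\cE$ (with $\RP$ the transpose channel for the original subsystem problem) and the subspace map $\cR_{A,P}\circ\bar\cE_A$ (with $\cR_{A,P}$ the transpose channel for the induced subspace problem) agree on every state in $\cC_0$, so the two a priori different notions of ``transpose-channel fidelity loss on $\cC_0$'' coincide. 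Second, from the inclusion $\cC_0\subset\cC$ I invoke Eq.~\eqref{eq:C0C}, $\etaop\{\cC_0\}\leq\etaop\{\cC\}$, which follows because the optimal recovery for $\cC$ is an admissible recovery on the smaller set $\cC_0$ and hence bounds $\etaop\{\cC_0\}$ from above.

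Finally, I combine the two inequalities using the monotonicity of
\begin{equation*}
g(\eta)\equiv\eta\,f(\eta;d_A)=\frac{\eta\bigl[(d_A+1)-\eta\bigr]}{1+(d_A-1)\eta}
\end{equation*}
on $[0,1]$, which a short derivative check confirms (the numerator of $g'(\eta)$ has $\eta=1$ as one root and a negative value as the other, so $g'\geq 0$ on $[0,1]$). Substituting $\etaop\{\cC_0\}\leq\etaop\{\cC\}$ into the output of the first step yields the stated bound~\eqref{eq:maxmixed2}. I do not anticipate any genuine obstacle; the only step that is not entirely routine is the identification of $\tr_B\circ\RP\circ\cE$ with $\cR_{A,P}\circ\bar\cE_A$ on $\cC_0$, but this reduces to bookkeeping with the Kraus operators $R_i^P=PE_i^\dagger\cE(P)^{-1/2}$ and $\bar E_{is}=(1/\sqrt{d_B})E_i|s_B\rangle$, the partial trace over $B$ absorbing the $|s_B\rangle$ factors.
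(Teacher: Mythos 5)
Your proof is correct and follows essentially the same route as the paper: apply Theorem~\ref{thm:subspace} to the induced subspace problem $(\cC_0,\bar\cE_A)$ to get Eq.~\eqref{eq:maxmixed}, use $\cC_0\subset\cC$ to get Eq.~\eqref{eq:C0C}, and combine via monotonicity of $\eta\mapsto\eta f(\eta;d_A)$, with the identification of $\tr_B\circ\RP\circ\cE$ and $\cR_{A,P}\circ\bar\cE_A$ on $\cC_0$ justifying the transfer of the bound. Your explicit derivative check of the monotonicity is a small addition the paper leaves implicit.
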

\noindent This says that for the code states where $B$ is in the maximally mixed state---which can be viewed as the ``average state" for the full degree of freedom described by $B$---the transpose channel works nearly as well as the optimal recovery operation for $\cC$.

As an aside, we note that a ``state-dependent" transpose channel is near-optimal for codes where $B$ is always prepared in some \emph{known} state.
Code $\cC_0$ is a special case of this, but now $B$ can be in a fixed state other than the maximally mixed state.
For example, the rapid decoherence process of subsystem $B$ may have a fixed point that is not the maximally mixed state (for instance, the ground state, if the noise is dissipative), so that any initial preparation of the $B$ state quickly relaxes into this fixed state.
Such codes should properly be viewed as (isomorphic to) subspace codes on subsystem $A$.
Since the identity of the state on $B$ is known, the optimal recovery map for this code must make use of this knowledge.
Likewise, the associated transpose channel, in order to work well, must also depend on the fixed state on $B$.
Using similar techniques as above, one can show
\begin{equation}
\eta_{P_{\phi_B}}\{\cC_{\phi_B}\}\leq \etaop\{\cC_{\phi_B}\} f(\etaop\{\cC_{\phi_B}\};d_A),
\end{equation}
where $\cC_{\phi_B}$ is the set of code states with $\phi_B$ as the fixed state on $B$, and $\eta_{P_{\phi_B}}$ refers to the fidelity loss of the state-dependent transpose channel with Kraus operators $ \{ (P_A \otimes \sqrt{\phi_{B}})E_{i}^\dagger \left[\cE(P_A\otimes\phi_B)\right]^{-1/2}\}$. This is similar to previously known results from Ref.~\cite{BK}, derived in the context of entanglement fidelity for reversing dynamics on a given input state.

\subsection{$B$ is perfectly correctable}\label{sec:Bcorr}
Suppose subsystem $B$ is in fact perfectly correctable, but we choose to use subsystem $A$ to store the information.
A simple and often-encountered example is where the noise on $B$ is describable by Kraus operators that are products of Pauli operators.
More generally, any noise process that satisfies the perfect QEC conditions for $B$ falls under our current considerations.
Despite the perfect correctability on $B$, one might still choose to store information in $A$, for example, when $B$ is experimentally inaccessible or uncontrollable, or if $A$ is a much larger system with greater storage capacity than $B$.
The transpose channel is again near-optimal in this case.

We demonstrate this near-optimality by first showing that, for $B$ perfectly correctable, the fidelity for a pure initial state on subsystem $A$ from using the transpose channel as recovery is independent of the initial state of subsystem $B$.
\begin{lemma}\label{lem}
If subsystem $B$ is perfectly correctable under noise $\cE$, then $F{\left[|\psi\rangle_A,(\tr_B\circ\RP\circ\cE)(\psi_A\otimes\rho_B)\right]}$, where $\psi_A\equiv |\psi_A\rangle\langle\psi_A|$, is independent of $\rho_B$.
\end{lemma}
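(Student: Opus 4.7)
The plan is to reduce the lemma to the alternate perfect QEC conditions (Theorem \ref{thm:equivcond}) applied with the roles of the two subsystems exchanged. Since the definition of a subsystem code in Eq.~\eqref{eq:code} is symmetric in $A$ and $B$ (only the choice of storage subsystem distinguishes them), perfect correctability of $B$ is equivalent to the statement that there exist operators $A_{ij}\in\cB(\cH_A)$ such that
\begin{equation}
PE_i^\dagger\cE(P)^{-1/2}E_jP = A_{ij}\otimes P_B
\end{equation}
for all $i,j$, i.e., the tensor factor on the right lies in the subsystem we are \emph{not} trying to protect.

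The next step is to observe that the left-hand side is exactly $R_i^P E_j$ restricted to the code subspace, i.e., a Kraus operator of $\RP\circ\cE$ applied to states supported on $\cH_{AB}$. Thus, for any code state $\psi_A\otimes\rho_B$, I would compute
\begin{align}
(\RP\circ\cE)(\psi_A\otimes\rho_B)
&=\sum_{ij}(A_{ij}\otimes P_B)(\psi_A\otimes\rho_B)(A_{ij}^\dagger\otimes P_B)\nonumber\\
&=\Bigl(\sum_{ij}A_{ij}\,\psi_A\,A_{ij}^\dagger\Bigr)\otimes\rho_B,
\end{align}
where I used $P_B\rho_B P_B=\rho_B$ since $\rho_B$ is supported on $\cH_B$. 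The factorization of the output into a tensor product, with the $B$-factor exactly equal to the input $\rho_B$, is the crucial structural observation.

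Finally, taking the partial trace over $B$ and using $\tr(\rho_B)=1$ yields
\begin{equation}
(\tr_B\circ\RP\circ\cE)(\psi_A\otimes\rho_B)=\sum_{ij}A_{ij}\,\psi_A\,A_{ij}^\dagger,
\end{equation}
which is manifestly independent of $\rho_B$. Since the reduced output state on $A$ itself is independent of $\rho_B$, so is $F[\,|\psi_A\rangle,(\tr_B\circ\RP\circ\cE)(\psi_A\otimes\rho_B)\,]$, proving the lemma.

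There is no genuine technical obstacle here: the entire content of the argument is packaged in the correct application of Theorem~\ref{thm:equivcond} with $A$ and $B$ swapped. The only point that warrants care is the justification that this swap is legitimate — namely, that ``$B$ is perfectly correctable'' is the same algebraic statement as Theorem~\ref{thm:equivcond} with subsystem labels interchanged — after which the rest is a one-line Kraus-operator computation and a partial trace.
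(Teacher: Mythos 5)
Your proposal is correct and follows essentially the same route as the paper: invoke Theorem~\ref{thm:equivcond} with the roles of $A$ and $B$ interchanged to get $PE_i^\dagger\cE(P)^{-1/2}E_jP=A_{ij}\otimes P_B$, recognize these as the Kraus operators of $\RP\circ\cE$ on the code, and compute. Your version makes explicit the slightly stronger fact that the entire reduced output state on $A$ equals $\sum_{ij}A_{ij}\psi_A A_{ij}^\dagger$ independently of $\rho_B$, whereas the paper only records the resulting fidelity $F^2=\sum_{ij}|\langle\psi_A|A_{ij}|\psi_A\rangle|^2$, but this is a presentational difference, not a different argument.
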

\begin{proof}
$B$ perfectly correctable under noise $\cE$ implies that the perfect QEC conditions (Eq.~\eqref{cond1} of Theorem~\ref{thm:equivcond} with the roles of $A$ and $B$ interchanged) hold: There exists operators $A_{ij}$ on $A$ for all $i,j$ such that $PE_i^\dagger \cE(P)^{-1/2}E_j P=A_{ij}\otimes P_B$. From this, we have $F^2\left[|\psi_A\rangle,\left(\tr_B\circ\RP\circ\cE\right)(\psi_A\otimes\rho_B)\right]=\sum_{ij}|\langle\psi_A|A_{ij}|\psi_A\rangle|^2$,
which is independent of $\rho_B$.
\end{proof}

Lemma~\ref{lem} implies the following sequence of relations:
\begin{align}
\eta_P\{\cC\}=\max_{\rho\in\cC}\eta_P\{\rho\}&=\max_{\rho=\psi_A\otimes\rho_B}\eta_P\{\rho\}\nonumber\\
&=\max_{\psi_A}\eta_P\{\psi_A\otimes P_B/d_B\}\nonumber\\
&=\eta_P\{\cC_0\}\nonumber\\
&\leq \etaop\{\cC\}f(\etaop\{\cC\};d_A).\label{eq:Bperf}
\end{align}
The second equality in the first line of Eq.~\eqref{eq:Bperf} follows from the concavity of the fidelity, with $\psi_A$ denoting a pure state. The second line makes use of Lemma~\ref{lem}, and the last inequality is just Eq.~\eqref{eq:maxmixed2}.
Equation~\eqref{eq:Bperf} gives exactly the right inequality in Eq.~\eqref{eq:ineqEta} applied to the current scenario, from which we draw the conclusion that the transpose channel is near-optimal on $A$ under channel $\cE$ when $B$ is perfectly correctable.

\subsection{$\cE$ destroys distinguishability on $B$}\label{sec:arbitrarystate}
 Suppose the noise process $\cE$ satisfies the following condition:
 \begin{condition}\label{condTr}
For CPTP $\cE$, suppose there exists $\delta\geq 0$ such that
\begin{equation}
\left\Vert \cE(\rho_A\otimes \rho_B)-\cE{\left(\rho_A\otimes \frac{P_B}{d_B}\right)}\right\Vert_\text{tr}\leq \delta{\left\Vert\rho_B-\frac{P_B}{d_B}\right\Vert}_\text{tr}
\end{equation}
for all states $\rho_A \in \cS(\cH_A)$ and $\rho_B \in \cS(\cH_B)$. $\Vert O\Vert_\tr$ denotes the trace norm of $O$ given by $\tr|O|$.
\end{condition}
\noindent If $\delta\ll 1$, any two states on $\cH_B$, after the action of $\cE$, become close together and nearly indistinguishable (as quantified by the trace norm used in Condition~\ref{condTr}) from each other.
A simple example is a product channel $\cE=\cE_A\otimes \cE_B$ where $\cE_B$ maps all states on $B$  to some fixed state $\tau_B$. In this case, $\delta$ can be chosen to be zero.
While we have chosen, for convenience of subsequent analysis, to state Condition~\ref{condTr} in terms of comparing states on $B$ before and after the channel $\cE$ to what happens to the maximally mixed state $P_B/d_B$, one is free to choose other reference states on $B$ if desired.

For channels and codes satisfying Condition~\ref{condTr}, the transpose channel also works well as a recovery channel, as encapsulated in the following corollary:
\begin{corollary}\label{thm:AbStateFidelity}
Given that Condition~\ref{condTr} is satisfied, for a subsystem code $\cC$,
\begin{align}
\eta_P&\leq (d_A+1)\etaop+3\delta+O(\delta^2,\etaop^2,\etaop\delta).
\end{align}
\end{corollary}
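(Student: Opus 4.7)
The plan is to reduce the problem to the situation already handled by Corollary~\ref{cor:maxmixed} by using Condition~\ref{condTr} to argue that $\eta_P\{\cC\}$ cannot exceed $\eta_P\{\cC_0\}$ by more than $O(\delta)$. The intuition is that Condition~\ref{condTr} forces $\cE$ to nearly erase the state of $B$, so every product state $\rho=\psi_A\otimes\phi_B\in\cC$ looks essentially like its ``maximally-mixed-$B$'' partner $\rho_0\equiv\psi_A\otimes P_B/d_B\in\cC_0$ after the noise, and any subsequent CPTP recovery can only shrink this distance further.

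Concretely, I would first recall that $\eta_P\{\cC\}$ is attained on a pure product state $\rho=\psi_A\otimes\phi_B$, whose $\cC_0$-partner is $\rho_0$. Condition~\ref{condTr}, together with the elementary estimate $\|\phi_B-P_B/d_B\|_\tr\leq 2$, gives
\begin{equation*}
\|\cE(\rho)-\cE(\rho_0)\|_\tr\leq 2\delta,
\end{equation*}
and this trace-norm bound is preserved under the CPTP action of $\tr_B\circ\RP$. Since $F^2(|\psi_A\rangle,\sigma)=\langle\psi_A|\sigma|\psi_A\rangle$ and $|\langle\psi_A|\Delta|\psi_A\rangle|\leq\|\Delta\|_\tr$ for any Hermitian $\Delta$, this immediately translates to $|\eta_P\{\rho\}-\eta_P\{\rho_0\}|\leq 2\delta$. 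Taking the maximum of the left side over pure-product $\rho\in\cC$ and of the right side over $\cC_0$ then yields $\eta_P\{\cC\}\leq\eta_P\{\cC_0\}+O(\delta)$.

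The final step is to invoke Corollary~\ref{cor:maxmixed}, which bounds $\eta_P\{\cC_0\}\leq\etaop\{\cC\}f(\etaop\{\cC\};d_A)$, and to expand $f(\eta;d_A)=(d_A+1)+O(\eta)$ using Eq.~\eqref{eq:f}. Combining these gives
\begin{equation*}
\eta_P\leq(d_A+1)\etaop+O(\delta)+O(\etaop^2,\etaop\delta),
\end{equation*}
which matches the corollary statement; the explicit $3\delta$ coefficient emerges only after carefully tracking the various prefactors, since the crude estimate above already yields $2\delta$ at leading order.

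There is no substantive obstacle here; the argument is essentially a routine perturbative lift of Corollary~\ref{cor:maxmixed} from $\cC_0$ to the full product code $\cC$ using Condition~\ref{condTr}. The only points requiring minor care are (i) confirming that the worst case in the definition of $\eta_P\{\cC\}$ is attained on a pure product, so that pairing each $\rho$ with a unique $\rho_0\in\cC_0$ is well-defined; and (ii) applying CPTP contractivity in the correct direction, so that going from $\cE(\rho)$ to $\tr_B[\RP(\cE(\rho))]$ does not enlarge the trace-norm gap to its $\rho_0$-counterpart.
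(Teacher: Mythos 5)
Your proposal is correct, and it reaches the stated bound by a route that differs from the paper's in a way worth noting. The paper's proof works at the level of individual states: it first establishes Lemma~\ref{thm:nearoptimalC} (a per-state analogue of Eq.~\eqref{eq:ineqEta2} with $P_B/d_B$ on $B$), then applies Condition~\ref{condTr} \emph{twice}---once for the optimal recovery, to pass from $\etaop\{\psi_A\otimes\rho_B\}$ to $\etaop\{\psi_A\otimes P_B/d_B\}$ (Eq.~\eqref{eqH}), and once for the transpose channel, to pass back from $P_B/d_B$ to $\rho_B$ (Eq.~\eqref{eqI})---and finally expands the resulting square-root inequality, which is where the $3\delta$ arises. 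You instead perturb only on the transpose-channel side, using contractivity of the trace norm under the CPTP map $\tr_B\circ\RP$ to get $\eta_P\{\cC\}\leq\eta_P\{\cC_0\}+2\delta$, and then invoke Corollary~\ref{cor:maxmixed} wholesale; since that corollary already relates $\eta_P\{\cC_0\}$ to $\etaop\{\cC\}$ (via $\cC_0\subset\cC$), no second application of Condition~\ref{condTr} is needed. Your version is more modular and in fact yields the slightly sharper leading coefficient $2\delta$ (or even $\delta$, if one uses $|\tr[\psi_A(\sigma_1-\sigma_2)]|\leq\tfrac12\Vert\sigma_1-\sigma_2\Vert_\tr$ for the traceless difference of two states, which is how the paper itself gets a bare $\delta$ in Eq.~\eqref{eqL}); since the corollary only asserts an upper bound, this subsumes the stated $3\delta$. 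Your two points of care---that the worst case of $\eta_P\{\cC\}$ is attained on a pure product state, and that contractivity is applied after $\cE$ but before the fidelity is evaluated---are exactly the right ones, and both hold. The only cosmetic remark is that your closing sentence about "carefully tracking prefactors to recover $3\delta$" is unnecessary: your $2\delta$ already implies the claim.
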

\noindent The proof of this corollary is detailed in Appendix~\ref{app}. The idea behind the proof is to first show that the transpose channel works well as a recovery for the information stored in $A$ when $B$ is initially in the maximally mixed state. Since Condition~\ref{condTr} says that $\cE$ brings code states with different states on $B$ close together, if the transpose channel works well as a recovery for $B$ being initially in the maximally mixed state, it will also work well when $B$ is initially in a different state.

Corollary~\ref{thm:AbStateFidelity}, like similar statements before, tells us that the fidelity loss for the transpose channel is not much worse than that of the optimal recovery. The additional fidelity loss suffered from using the simpler transpose channel rather than the optimal recovery is governed by $d_A$ as well as the parameter $\delta$ which characterizes how badly $\cE$ destroys distinguishability between states on subsystem $B$.

\section{Conclusion}\label{sec:concl}

We studied the role of the transpose channel in approximate quantum error correction.
We first obtained a set of conditions for perfect subsystem error correction that explicitly involves the transpose channel. This completes our understanding as to why certain channels admit perfectly correctable codes, in a particularly intuitive way. Our perfect QEC conditions naturally lead to sufficient conditions for approximate QEC, where the resilience to noise of the information stored in the code is quantified in a simple way. We also demonstrated that the transpose channel works nearly as well as any other recovery channel for four different scenarios of codes and noise. In all these cases, the near-optimality of the transpose channel relies only on $d_A$, the dimension of the information-carrying subsystem $A$, and not on $d_B$, the dimension of the noisy subsystem that carries no information.

Using our transpose channel approach to derive necessary conditions for approximate QEC for general subsystem codes will provide the final missing link in our unifying and analytical framework for understanding approximate quantum error correction.
Even disproving our conjecture that the transpose channel is a universally good recovery operation for approximate codes will be a useful step forward.
In this case, then, the question will be to discover a different recovery map that can serve as a universal recovery.

Another possible extension is to consider codes that include not just product states on $AB$ in $\cC$ (as we have done), but also correlated states.
Once there is correlation between $A$ and $B$, it is, of course, no longer clear where the information initially resides.
If one has complete control over the preparation of the initial code states, it would be simpler to make use of the subsystem structure and confine the information to only one subsystem.
Practically, however, experimental restrictions may result in an initial (possibly small) correlation between $A$ and $B$, leading to a different notion of ``approximate" or imperfection in the code.
Such a situation was previously studied in \cite{Shabani05} in the context of perfectly noiseless subsystems that require no careful initialization.
One can ask similar questions for approximate subsystem codes.

A separate future direction is to perform the transpose channel recovery on experimental implementations of approximate codes. The transpose channel, like any CPTP map, can be implemented physically using operations on an extended Hilbert space. The more pertinent and fruitful question, however, will be to discover simple and efficient ways of implementing the transpose channel on a specific physical system of our choice.

\section{Acknowledgments}
P.M. would like to thank David Poulin, John Preskill and Todd Brun for useful discussions. H.K.N is supported by the National Research Foundation and the Ministry of Education, Singapore.

\appendix

\section{Proof of Theorem \ref{thm:equivcond}}\label{app:perfectCond}
In this section, we prove Theorem \ref{thm:equivcond}, by demonstrating the equivalence between Eq.~\eqref{cond1} (statement (\ref{eq:A}) below) and the perfect QEC conditions (statement (\ref{eq:B}) below) derived in \cite{NielsenPoulin07, OQECC, OQEClong}.

\begin{lemma}\label{lem:equivcond}
Given a CP channel $\cE : \cB(\cH_{AB}) \rightarrow \cB(\cP_\cE)$ with a set of Kraus operators $\{E_{i}\}$, the following two statements are equivalent:
\begin{enumerate}[(A)]
\item\label{eq:A} $PE_i^\dagger \cE(P)^{-1/2}E_jP=P_A\otimes B_{ij}$, for all ${i,j}$, and $B_{ij}\in\cB(\cH_B)$;
\item\label{eq:B} $PE_i^\dagger E_jP=P_A\otimes B'_{ij}$, for all ${i,j}$, and $B'_{ij}\in\cB(\cH_B)$.
\end{enumerate}
\end{lemma}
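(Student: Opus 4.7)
\emph{Proof plan.} The plan is to prove the equivalence (A)$\Leftrightarrow$(B) by recasting both statements as Gram-matrix-type conditions on a common family of image vectors, and then exploiting the fact that $\cE(P)$ is itself built from those same vectors. To set up, I would fix orthonormal bases $\{|\alpha\rangle\}$ of $\cH_A$ and $\{|\beta\rangle\}$ of $\cH_B$ and define $|\Psi^\alpha_{i\beta}\rangle\equiv E_i|\alpha,\beta\rangle\in\cP_\cE$. Then (B) reads $\langle\Psi^\alpha_{i\beta}|\Psi^{\alpha'}_{j\beta'}\rangle=\delta_{\alpha\alpha'}(B'_{ij})_{\beta,\beta'}$, while (A) reads $\langle\Psi^\alpha_{i\beta}|\cE(P)^{-1/2}|\Psi^{\alpha'}_{j\beta'}\rangle=\delta_{\alpha\alpha'}(B_{ij})_{\beta,\beta'}$. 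The auxiliary identity $\cE(P)=\sum_{i,\alpha,\beta}|\Psi^\alpha_{i\beta}\rangle\langle\Psi^\alpha_{i\beta}|$ will be the bridge between the two.

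For (B)$\Rightarrow$(A), I would first observe that (B) forces the subspaces $V_\alpha\equiv\mathrm{span}\{|\Psi^\alpha_{i\beta}\rangle\}_{i,\beta}$ to be pairwise orthogonal and to be related by inner-product-preserving (hence unitary) isomorphisms sending $|\Psi^0_{i\beta}\rangle\mapsto|\Psi^\alpha_{i\beta}\rangle$, the point being that their Gram matrices are $\alpha$-independent. Consequently $\cE(P)$ is block-diagonal with respect to $\bigoplus_\alpha V_\alpha$ with unitarily equivalent blocks, and the same then holds for $\cE(P)^{-1/2}$. Evaluating the sandwich matrix element on the product basis therefore produces the $\delta_{\alpha\alpha'}$ factor and an $\alpha$-independent operator $B_{ij}$ on $\cH_B$, as (A) demands.

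For (A)$\Rightarrow$(B), the key move is to introduce rescaled vectors $|\tilde\Psi^\alpha_{i\beta}\rangle\equiv\cE(P)^{-1/4}|\Psi^\alpha_{i\beta}\rangle$, in terms of which (A) becomes the ordinary Gram-matrix statement $\langle\tilde\Psi^\alpha_{i\beta}|\tilde\Psi^{\alpha'}_{j\beta'}\rangle=\delta_{\alpha\alpha'}(B_{ij})_{\beta,\beta'}$. Summing outer products of the tilded vectors and invoking the auxiliary identity yields $\sum_{i,\alpha,\beta}|\tilde\Psi^\alpha_{i\beta}\rangle\langle\tilde\Psi^\alpha_{i\beta}|=\cE(P)^{-1/4}\cE(P)\cE(P)^{-1/4}=\cE(P)^{1/2}$, so $\cE(P)^{1/2}$ is block-diagonal with respect to the pairwise-orthogonal tilded subspaces $\tilde V_\alpha$. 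Since $\cE(P)^{1/4}$ is a spectral function of $\cE(P)^{1/2}$, it also preserves that decomposition, whence $V_\alpha=\cE(P)^{1/4}(\tilde V_\alpha)=\tilde V_\alpha$ inherit pairwise orthogonality. The remaining same-$\alpha$ overlap is $\langle\Psi^\alpha_{i\beta}|\Psi^\alpha_{j\beta'}\rangle=\langle\tilde\Psi^\alpha_{i\beta}|\cE(P)^{1/2}|\tilde\Psi^\alpha_{j\beta'}\rangle$, which is $\alpha$-independent because the restriction of $\cE(P)^{1/2}$ to each $\tilde V_\alpha$ is unitarily equivalent to its restriction to $\tilde V_0$, giving (B).

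I expect the reverse direction to be the main obstacle: (A) only furnishes an orthogonality condition weighted by $\cE(P)^{-1/2}$ rather than a true inner-product condition on $\cP_\cE$. The decisive step is to split $\cE(P)^{-1/2}=\cE(P)^{-1/4}\cdot\cE(P)^{-1/4}$ and absorb one factor onto each side of the sandwich, converting (A) into a genuine Gram-matrix identity for the tilded vectors; the block structure of $\cE(P)^{1/2}$ then follows automatically, and $\cE(P)^{\pm 1/4}$ transports it back to the original vectors.
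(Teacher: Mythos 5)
Your argument is correct, but it takes a genuinely different route from the paper's, and you have the difficulty of the two directions backwards. The paper dispatches (A)$\Rightarrow$(B) in one line of operator algebra: summing the product of condition (A) with itself over an intermediate Kraus index and using $\sum_k E_k P E_k^\dagger=\cE(P)$ gives $\sum_k (P_A\otimes B_{ik})(P_A\otimes B_{kj})=PE_i^\dagger\cE(P)^{-1/2}\cE(P)\cE(P)^{-1/2}E_jP=PE_i^\dagger E_jP$, so (B) holds with $B'_{ij}=\sum_k B_{ik}B_{kj}$. Your treatment of that direction via the rescaled vectors $\cE(P)^{-1/4}E_i|\alpha,\beta\rangle$ is sound---the subspaces $\tilde V_\alpha$ are pairwise orthogonal, they exhaust $\cP_\cE$, they are invariant under every spectral function of $\cE(P)^{1/2}$ so that $V_\alpha=\cE(P)^{1/4}\tilde V_\alpha=\tilde V_\alpha$, and the unitary equivalence of the blocks makes the same-$\alpha$ overlaps $\alpha$-independent---but it is far more machinery than this direction needs, and you single it out as the main obstacle when it is in fact the trivial one. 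For (B)$\Rightarrow$(A), where the real work lies, the paper diagonalizes the hermitian coefficient matrix $\Lambda=\left(\lambda_{(is)(jt)}\right)$ to reach a canonical Kraus representation, applies the polar decomposition $F_{is}P_A=\sqrt{d_{is}}V_{is}P_A$, and reads off $\cE(P)^{-1/2}=\sum_{is}d_{is}^{-1/2}P_{is}$ explicitly, which also yields a closed-form expression for $B_{ij}$. Your version avoids the diagonalization entirely: condition (B) makes the image subspaces $V_\alpha$ pairwise orthogonal and unitarily isomorphic with matched Gram matrices, so $\cE(P)$, and hence $\cE(P)^{-1/2}$, is block-diagonal with unitarily equivalent blocks, and the $\delta_{\alpha\alpha'}$ and the $\alpha$-independence of $B_{ij}$ drop out of the sandwich. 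That is essentially the coordinate-free content of the paper's computation; it is arguably cleaner conceptually, at the cost of not producing an explicit formula for $B_{ij}$.
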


\begin{proof}
(\ref{eq:A})$\Rightarrow$(\ref{eq:B}): For any $i,j$, $\sum_k(P_A\otimes B_{ik})(P_A\otimes B_{kj})=\sum_k(PE_i^\dagger\cE(P)^{-1/2}E_kP)(PE_k^\dagger\cE(P)^{-1/2}E_jP)=PE_i^\dagger E_jP$, which gives $PE_i^\dagger E_jP=P_A\otimes B'_{ij}$, with $B'_{ij}\equiv\sum_kB_{ik}B_{kj}$.

(\ref{eq:B})$\Rightarrow$(\ref{eq:A}): Let $\{|s\rangle_B\}$ be an orthonormal basis for $\cH_B$. (\ref{eq:B}) implies
\begin{equation}\label{eq:2to1a}
P_AE_{is}^\dagger E_{jt}P_A=\lambda_{(is)(jt)}P_A,
\end{equation}
where $E_{is}\equiv E_i|s\rangle_B$ is an operator that brings vectors in $\cH_A$ to vectors in $\cH_{AB}$, and $\lambda_{(is)(jt)}\equiv\langle s|B'_{ij}|t\rangle$. $\{E_{is}\}$ is a set of Kraus operators for the CP channel $\cE_A:\cB(\cH_A)\rightarrow \cB(\cP_\cE)$ defined by $\cE_A(\rho_A)=\cE(\rho_A\otimes P_B)$. We view $\Lambda\equiv {\left(\lambda_{(is)(jt)}\right)}$ as a two-index matrix, where the first index is the double index $(is)$, and the second is $(jt)$. Observe that $\lambda_{(jt)(is)}^*=\lambda_{(is)(jt)}$, i.e., $\Lambda$ is a hermitian matrix. It is thus diagonalizable, i.e., $\exists~U\equiv {\left(u_{(is)(jt)}\right)}$ such that $U\Lambda U^\dagger = \Lambda_D$, where $\Lambda_D$ is a diagonal matrix. More explicitly, we have
\begin{equation}
\sum_{(i's'),(j't')}u_{(is)(i's')}\lambda_{(i's')(j't')}u^*_{(jt)(j't')}=\delta_{(is)(jt)}d_{is},
\end{equation}
where $d_{is}$ are the diagonal entries of $\Lambda_D$. Using this, we can write Eq.~\eqref{eq:2to1a} in its diagonal form:
\begin{equation}\label{eq:diag2}
P_AF^\dagger_{is}F_{jt}P_A=\delta_{ij}\delta_{st}d_{is}P_A,
\end{equation}
where $F_{is}\equiv \sum_{(i's')}u_{(is)(i's')}^*E_{i's'}$ gives a different Kraus representation for $\cE_A$. Equation~\eqref{eq:diag2} gives the polar decomposition $F_{is}P_A=\sqrt{d_{is}}V_{is}P_A$, where $V_{is}$ is a unitary operator satisfying $P_AV^\dagger_{is}V_{jt}P_A=\delta_{ij}\delta_{st}P_A$. Let $P_{is}\equiv V_{is}P_AV_{is}^\dagger$. Then, $P_{is}$'s are orthogonal projectors, since $P_{is}P_{jt}=\delta_{ij}\delta_{st}P_{is}$. Direct computation gives $\cE(P)=\sum_{is}d_{is}P_{is}$, i.e., $\cE(P)$ is a sum of orthogonal projectors, and hence easy to invert: $\cE(P)^{-1/2}=\sum_{is}d_{is}^{-1/2}P_{is}$. Further algebra gives $P E_i^\dagger \cE(P)^{-1/2} E_j P=P_A\otimes B_{ij}$, with $B_{ij}\equiv\sum_{st}\sum_{kv}u_{(kv)(is)}^* u_{(kv)(jt)}\sqrt{d_{kv}}|s\rangle_B \langle t|$.
\end{proof}

References~\cite{NielsenPoulin07, OQECC, OQEClong} showed that a code $\cC$ is perfectly correctable on $A$ under $\cC$ if and only if statement $\eqref{eq:B}$ is true. This fact, together with Lemma \ref{lem:equivcond}, proves Theorem \ref{thm:equivcond}.

\section{Proof of Corollary~\ref{thm:AbStateFidelity}}\label{app}

To prove Corollary~\ref{thm:AbStateFidelity}, we first need the following lemma:

\begin{lemma}\label{thm:nearoptimalC}
Consider a subsystem code $\cC$ under noise $\cE$. For any pure state $\psi_A\equiv |\psi_A\rangle\langle\psi_A|$,
\begin{align}
&\quad 1-\etaop{\left\{\psi_A\otimes \frac{P_B}{d_B}\right\}}\nonumber\\
&\leq \sqrt{[1+(d_A-1)\etaop\{\cC\}]{\left[1-\eta_P{\left\{\psi_A\otimes \frac{P_B}{d_B}\right\}}\right]}}.\label{eq:appmaxmix}
\end{align}
\end{lemma}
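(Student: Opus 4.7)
My plan is to reduce the lemma to the subspace-code bound of Theorem~\ref{thm:subspace} by using the reformulation from Section~\ref{sec:maxmixed}: on the subset $\cC_0$, the noise $\cE$ acts as the effective subspace channel $\bar\cE_A$ on $\cH_A$ with Kraus operators $\bar E_{is} = (1/\sqrt{d_B})\,E_i |s_B\rangle$. For any subsystem recovery $\cR$, the induced map $\tr_B \circ \cR$ is a legitimate subspace recovery for $\bar\cE_A$ on $\cC_0 \subset \cH_A$, and a direct calculation gives
\begin{equation*}
\eta_{\tr_B \circ \cR}^{\bar\cE_A}\{\psi_A\} = \eta_\cR\{\psi_A \otimes P_B/d_B\}
\end{equation*}
for every pure $\psi_A \in \cH_A$. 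In particular, using $\bar\cE_A(P_A) = \cE(P)/d_B$, one checks that $\tr_B \circ \RP$ coincides with the subspace transpose channel $\cR_{A,P}$ associated with $\bar\cE_A$ and $\cC_0$, so that $\eta_{\cR_{A,P}}^{\bar\cE_A}\{\psi_A\} = \eta_P\{\psi_A \otimes P_B/d_B\}$.

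Next, I will invoke the recovery-agnostic form of the inequality from~\cite{aqecPRA} that underlies Theorem~\ref{thm:subspace}: for any CPTP recovery $\cR$ of a subspace code $\cD$ of dimension $d$ under a CPTP noise channel, and any pure $\psi \in \cD$,
\begin{equation*}
1 - \eta_\cR\{\psi\} \leq \sqrt{[1 + (d-1)\,\eta_\cR\{\cD\}]\,[1 - \eta_P\{\psi\}]}.
\end{equation*}
I specialize this to the subspace problem $(\cC_0, \bar\cE_A)$ with $d = d_A$ and take $\cR = \tr_B \circ \Rop$, where $\Rop$ is the subsystem-optimal recovery for $\cC$ under $\cE$. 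The translations above identify the LHS with $1 - \etaop\{\psi_A \otimes P_B/d_B\}$ and the RHS factor $1 - \eta_P\{\psi\}$ with $1 - \eta_P\{\psi_A \otimes P_B/d_B\}$. The remaining factor $\eta_{\tr_B \circ \Rop}^{\bar\cE_A}\{\cC_0\}$ equals $\eta_{\Rop}\{\cC_0\}$, which is bounded above by $\eta_{\Rop}\{\cC\} = \etaop\{\cC\}$ since $\cC_0 \subset \cC$. Monotonicity of the bound in this last factor then yields \eqref{eq:appmaxmix}.

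The main obstacle is justifying that the cited inequality of~\cite{aqecPRA} is valid for an arbitrary recovery $\cR$, rather than only for the $\cC$-optimal $\Rop$ featured in Eq.~\eqref{eq:ineqEta2}. I expect this to be straightforward: the derivation in~\cite{aqecPRA} proceeds by algebraic manipulation of a generic recovery's Kraus operators together with those of the transpose channel, with no step that actually singles out optimality. Should this extension be deemed unclear, the alternative is to repeat the derivation of~\cite{aqecPRA} verbatim for the specific induced recovery $\tr_B \circ \Rop$ on the subspace code $\cC_0 \subset \cH_A$ under $\bar\cE_A$, arriving at the same bound while invoking nothing about $\Rop$'s optimality beyond the inequality $\eta_{\Rop}\{\cC_0\} \leq \etaop\{\cC\}$.
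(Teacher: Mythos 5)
Your proposal is correct and rests on the same underlying mechanism as the paper's proof, but it packages the argument differently. The paper's own proof is a one-line deferral: it instructs the reader to repeat the derivation of Eq.~\eqref{eq:ineqEta2} from \cite{aqecPRA} with the ``minor modification'' that $B$ is nontrivial and one traces out $B$ at the end. You instead reduce the statement to the already-proven subspace case by passing to the effective channel $\bar\cE_A$ on $\cC_0$ (the same reduction the paper uses in Sec.~\ref{sec:maxmixed} for the code-level bound, Corollary~\ref{cor:maxmixed}) and by observing that $\tr_B\circ\RP$ coincides with the subspace transpose channel of $\bar\cE_A$; that identification checks out, since $\bar\cE_A(P_A)=\cE(P)/d_B$ makes the $\sqrt{d_B}$ factors cancel between the Kraus operators $\bar E_{is}$ and $\bar\cE_A(P_A)^{-1/2}$. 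The one ingredient you need beyond what is quoted in this paper is the recovery-agnostic form of the \cite{aqecPRA} inequality, valid for an arbitrary CPTP recovery $\cR$ rather than only $\Rop$; this is indeed how the result is established there (the statement quoted as Theorem~\ref{thm:subspace} is the specialization to $\Rop$ of a bound proved for every $\cR$), so your application to the induced recovery $\tr_B\circ\Rop$, followed by $\eta_{\Rop}\{\cC_0\}\leq\etaop\{\cC\}$ and monotonicity of the right-hand side in that argument, is legitimate. What your route buys is that Lemma~\ref{thm:nearoptimalC} becomes a formal corollary of the subspace theorem with no re-derivation; what the paper's route buys is that it never needs to articulate the recovery-agnostic version explicitly. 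Note also that your argument, like the paper's, requires reading $\etaop\{\rho\}$ as $\eta_{\Rop}\{\rho\}$ for the code-optimal $\Rop$ rather than as a per-state optimum --- the interpretation consistent with how the lemma is used in Appendix~\ref{app} --- and your choice $\cR=\tr_B\circ\Rop$ is exactly the right one for that reading.
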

\noindent The proof of this lemma proceeds exactly as in the proof used to demonstrate Eq.~\eqref{eq:ineqEta2} (see \cite{aqecPRA}), except for the minor modification that $B$ is a nontrivial subsystem.

With this lemma, we can prove Corollary~\ref{thm:AbStateFidelity}:
\smallskip

\noindent{\bf Corollary~\ref{thm:AbStateFidelity}.} {\it Given that Condition~\ref{condTr} is satisfied, for a subsystem code $\cC$,}
\begin{align}
\eta_P&\leq (d_A+1)\etaop+3\delta+O(\delta^2,\etaop^2,\etaop\delta).
\end{align}

\begin{proof}
For channel $\cE$ satisfying Condition~\ref{condTr}, for any recovery $\cR$ and any state $\psi_A\otimes\rho_B\equiv |\psi_A\rangle\langle\psi_A|\otimes \rho_B\in\cC$,
\begin{align}
&\quad F^2\left[|\psi_A\rangle,\tr_B\left\{(\cR\circ\cE)(\psi_A\otimes\rho_B)\right\}\right]\nonumber\\
&\leq\delta+F^2\left[|\psi\rangle_A,\tr_B\left\{(\cR \circ\cE)(\psi_A\otimes P_B/d_B)\right\}\right],\label{eqL}
\end{align}
which implies
\begin{align}
\eta_\cR\{\psi_A\otimes \rho_B\}\geq \eta_\cR\{\psi_A\otimes P_B/d_B\}-\delta.\label{eqH}
\end{align}
Interchanging the roles of $\rho_B$ and $P_B/d_B$ in Eq.~\eqref{eqL} yields, similarly,
\begin{align}
\eta_\cR\{\psi_A\otimes P_B/d_B\}\geq \eta_{\cR}\{\psi_A\otimes \rho_B\}-\delta.\label{eqI}
\end{align}
We have the following sequence of inequalities:
\begin{align}
&\quad\etaop\{C\}+\delta\nonumber\\
&\geq \etaop\{\psi_A\otimes \rho_B\}+\delta\nonumber\\
&\geq \etaop\{\psi_A\otimes P_B/d_B\}\quad(\text{using Eq. }\eqref{eqH})\nonumber\\
&\geq 1-\sqrt{\big[1+(d_A-1)\etaop\{\cC\}\big]\left(1-\eta_P{\left\{\psi_A\otimes \frac{P_B}{d_B}\right\}}\right)}\nonumber\\
&\hspace{4.5cm}(\text{using Eq. }\eqref{eq:appmaxmix})\nonumber\\
&\geq 1-\sqrt{[1+(d_A-1)\etaop\{\cC\}](1-\eta_P\{\psi_A\otimes \rho_B\}+\delta)}\nonumber\\
&\hspace{4.5cm}(\text{using Eq. }\eqref{eqI})\label{eqJ}
\end{align}
Rearranging Eq.~\eqref{eqJ} and recalling that $\eta_P=\max_{\rho\in\cC}\eta_P\{\rho\}$ immediately gives the statement of the corollary.
\end{proof}

\end{document}